\documentclass[11pt]{article}

\usepackage[T1]{fontenc}
\usepackage[utf8]{inputenc}
\usepackage{amsmath,amssymb,amsthm,mathtools,bm}
\usepackage{microtype}
\usepackage{geometry}
\usepackage{hyperref}
\usepackage{enumitem}
\usepackage{booktabs}
\usepackage{array}
\usepackage{authblk}
\hypersetup{colorlinks=true,citecolor=blue,linkcolor=blue,urlcolor=blue}

\newtheorem{theorem}{Theorem}
\newcommand{\Tr}{\operatorname{Tr}}
\newcommand{\dd}{\mathrm{d}}
\newcommand{\KL}{D_{\mathrm{KL}}}
\newcommand{\Tcl}{T_{\mathrm{cl}}}
\newcommand{\Tq}{T_{\mathrm{q}}}
\newcommand{\GQD}{D_{\mathrm{GQD}}}

\title{Sklar's Theorem and Quantum State Reconstruction from All-Context Dependence}

\author{Yi-Yu Lin$^{1,2}$\thanks{yiyu@simis.cn}}

\affil{${}^1$Fudan Center for Mathematics and Interdisciplinary Study, Fudan University, Shanghai, 200433, China}
\affil{${}^2$Shanghai Institute for Mathematics and Interdisciplinary Sciences (SIMIS), Shanghai, 200433, China}
\date{}
\begin{document}
\maketitle
\begin{abstract}
Sklar's theorem separates the marginals of a joint distribution from its dependence structure. We apply this viewpoint to Born statistics generated by local quantum measurements. For non-product two-qubit states, the dependence nuclei over all local binary projective measurement contexts determine the state up to at most a discrete double-spin-flip ambiguity. The corresponding scalar optimization, combined with the quantum total correlation, reproduces global quantum discord.
\end{abstract}

\section{Introduction}

In classical probability theory, Sklar's theorem states that a joint distribution can be decomposed into its marginal distributions and a copula that captures their dependence structure. For continuous random variables, the copula is uniquely determined by the joint distribution, providing a natural separation between ``how each variable is distributed on its own'' and ``how the variables depend on one another'' \cite{Sklar1959,Nelsen2006}.

Quantum mechanics immediately supplies a class of probability distributions that can be reconsidered from this viewpoint. Given a multipartite quantum state and a choice of local measurement bases, the Born rule produces a classical joint distribution for the measurement outcomes. If the measurement is repeated experimentally, what is obtained is precisely an ordinary joint statistical table. Once the measurement bases have been fixed, the probability table itself has no special ``quantum grammar'': one may ask of it the same questions about marginals, correlations, and dependence as for classical joint data arising elsewhere. It is therefore natural to ask the same question suggested by Sklar's theorem: after removing the marginal statistics of the local outcomes, what dependence structure remains?

The quantum setting, however, contains an additional structure that is usually absent from an ordinary classical-statistical problem. The same quantum state produces different joint distributions in different local measurement contexts. We therefore do not examine the dependence structure in one fixed context only, but the entire family of dependence structures generated by all local contexts. This leads to a simple question: \emph{if the marginal information is discarded in every local measurement context, how much information about the original quantum state remains in the collection of all resulting dependence structures?}

Formally, we study the map
\begin{equation}
\rho \longmapsto \left\{\mathcal D\!\left(p_{\rho,B}\right)\right\}_{B},
\label{eq:mastermap}
\end{equation}
where $p_{\rho,B}$ is the joint distribution generated from the state $\rho$ by the Born rule in a local measurement context $B$, and $\mathcal D$ denotes the dependence structure extracted from that distribution after the marginal information has been removed.

The basic structure revealed by Sklar's theorem is the separation of marginals from dependence. When the marginal distributions are continuous, this separation is especially clean: the joint distribution determines a unique copula, so the copula itself can serve as the margin-free dependence object. For discrete variables, copula-based multivariate modeling is also well developed \cite{Nikoloulopoulos2013,Panagiotelis2012}. For the present purpose, however, we are interested in a more structural question: given a discrete joint table that is already fixed, how should one extract an intrinsic dependence object that is independent of its marginals? A subtlety is that, for discrete marginals, the copula in Sklar's representation is generally not uniquely determined by the joint distribution, so the correspondence between ``the copula'' and intrinsic margin-free dependence is no longer as direct as in the continuous case \cite{GenestNeslehova2007}. We therefore adopt the dependence-nucleus viewpoint developed by Geenens \cite{Geenens2020}: probability tables related only by positive local reweightings are regarded as having the same dependence, and the resulting equivalence class is called a dependence nucleus. Thus, in the terminology of this paper, the copula is the object that carries margin-free dependence in the continuous case, whereas the dependence nucleus is the corresponding dependence object that we adopt in the discrete case.

The rest of the paper is organized as follows. Section~\ref{sec:reconstruction} introduces the dependence nucleus for binary discrete distributions and proves the all-context reconstruction theorem for two qubits. Section~\ref{sec:discord} compresses the full dependence structure in each context to a scalar and shows that the corresponding optimization reproduces global quantum discord. Section~\ref{sec:outlook} discusses natural extensions to multiqubit and continuous-variable systems. Appendix~\ref{app:sklar} reviews the basic facts about Sklar's theorem and copulas used in the main text.

\section{All-context dependence reconstruction for two qubits}
\label{sec:reconstruction}

\subsection{Dependence structure in binary discrete distributions}
\label{subsec:binary}

Sklar's theorem, the general definition of a copula, and the continuous-variable case are reviewed in Appendix~\ref{app:sklar}. Here we introduce only the binary discrete structure needed for the two-qubit analysis.

In the continuous case, a copula separates the marginal statistics from the dependence structure of a joint distribution; equivalently, one can transform each marginal to a uniform distribution while retaining only the dependence. For discrete random variables, the ordinary Sklar copula is no longer uniquely determined by the joint distribution. We therefore use the dependence nucleus of Ref.~\cite{Geenens2020}, defined by transformations that change local marginal weights while preserving the dependence structure.

Consider two binary random variables $X,Y\in\{+,-\}$ with a strictly positive joint distribution
\begin{equation}
p=
\begin{pmatrix}
 p_{++} & p_{+-}\\
 p_{-+} & p_{--}
\end{pmatrix},
\qquad
p_{st}>0,
\qquad
\sum_{s,t=\pm}p_{st}=1.
\end{equation}
Two probability tables are said to belong to the same dependence nucleus if there exist positive numbers $r_s,c_t$ such that
\begin{equation}
p'_{st}=\frac{1}{Z}\,r_s c_t\,p_{st},
\label{eq:reweight}
\end{equation}
where $Z$ is a normalization factor. The factor $r_s$ depends only on the value of $X$, while $c_t$ depends only on the value of $Y$; the transformation can therefore alter the two marginal distributions without changing their joint dependence in this sense.

The equivalence relation in Eq.~\eqref{eq:reweight} itself remains meaningful on the boundary of the probability simplex, where some cells may vanish: positive local reweightings preserve the support pattern of the table. In that case the support pattern is part of the dependence nucleus, and the ordinary odds ratio need not be a finite or complete coordinate. The odds-ratio description below therefore applies to strictly positive tables, while the reconstruction theorem in Sec.~\ref{subsec:twoqubit} will use Eq.~\eqref{eq:reweight} directly so that rank-deficient quantum states are included.

For a strictly positive $2\times2$ table, the dependence nucleus is completely characterized by the odds ratio
\begin{equation}
\omega(p)=\frac{p_{++}p_{--}}{p_{+-}p_{-+}}.
\label{eq:odds-basic}
\end{equation}
Indeed,
\begin{equation}
\omega(p')=
\frac{(r_+c_+p_{++})(r_-c_-p_{--})}
{(r_+c_-p_{+-})(r_-c_+p_{-+})}
=\omega(p),
\end{equation}
and conversely two strictly positive $2\times2$ tables with the same odds ratio belong to the same dependence nucleus. Thus $\omega$ is a complete coordinate for the margin-free dependence structure in the binary case.

Each such dependence nucleus also contains a unique representative with uniform marginals,
\begin{equation}
\overline p_{\omega}
=\frac{1}{2(1+\sqrt\omega)}
\begin{pmatrix}
\sqrt\omega & 1\\
1 & \sqrt\omega
\end{pmatrix}.
\label{eq:uniformrep}
\end{equation}
It satisfies
\begin{equation}
\overline p_X(+)=\overline p_X(-)=\overline p_Y(+)=\overline p_Y(-)=\frac12.
\end{equation}
This uniform-margin representative is the discrete counterpart of the copula representation in the continuous case. In the latter, the probability-integral transform
\begin{equation}
U=F_X(X),\qquad V=F_Y(Y)
\end{equation}
makes the two marginals uniform on $[0,1]$ while leaving the dependence between $U$ and $V$; in the binary case, $\overline p_\omega$ likewise fixes both marginals to be uniform while retaining only the dependence encoded by $\omega$.

Independence corresponds to
\begin{equation}
\omega=1,
\end{equation}
for which
\begin{equation}
\overline p_{\omega=1}=\frac14
\begin{pmatrix}
1&1\\1&1
\end{pmatrix}.
\end{equation}
Thus the three structures needed below may be read in parallel as
\begin{equation}
\begin{array}{ccc}
\text{continuous case} && \text{binary discrete case}\\[2pt]
\text{margin-free copula} &\longleftrightarrow& \text{dependence nucleus}\\
\text{uniform marginals} &\longleftrightarrow& \text{uniform-margin representative}\\
\text{dependence structure} &\longleftrightarrow& \omega.
\end{array}
\end{equation}
We now apply this construction directly to Born probability tables produced by two-qubit states in arbitrary local measurement bases.

\subsection{All-context dependence reconstruction}
\label{subsec:twoqubit}

Once local measurement bases are chosen, the Born rule turns a quantum state into an ordinary classical joint probability table. The preceding subsection shows that, for a strictly positive binary table, its margin-free dependence is completely characterized by the odds ratio. We now ask whether the corresponding dependence data, collected over all local measurement contexts, suffice to determine the underlying two-qubit state.

For later convenience, we use the bounded reparametrization
\begin{equation}
Q
:=
\frac{
p_{++}p_{--}-p_{+-}p_{-+}
}{
p_{++}p_{--}+p_{+-}p_{-+}
}.
\label{eq:Qdef}
\end{equation}
For a strictly positive table this is simply
\begin{equation}
Q=\frac{\omega-1}{\omega+1},
\label{eq:Qomega}
\end{equation}
so it carries exactly the same dependence information as the odds ratio. Under a local positive reweighting
\begin{equation}
p_{st}\longmapsto
\frac{1}{Z}r_s c_t\,p_{st},
\qquad
r_s,c_t>0,
\end{equation}
the two products in Eq.~\eqref{eq:Qdef} acquire the same positive multiplicative factor. Hence $Q$ is invariant under the same local reweightings that remove the marginal information.

The advantage of $Q$ is that it remains finite at the boundary of the probability simplex. We shall show below that, for every non-product two-qubit state, its denominator is strictly positive in every local projective measurement context. Thus $Q$ is well defined even for pure entangled states and rank-deficient mixed states.

\begin{theorem}[All-context dependence reconstruction for two qubits]
Let $\rho$ and $\sigma$ be two non-product two-qubit states. Suppose that, for every pair of local binary projective measurements,
\begin{equation}
Q_\rho(\mathbf x,\mathbf y)
=
Q_\sigma(\mathbf x,\mathbf y)
\qquad
\forall\,\mathbf x,\mathbf y\in S^2.
\label{eq:allcontext-Q}
\end{equation}
Then
\begin{equation}
\sigma=\rho
\quad\text{or}\quad
\sigma=\widetilde\rho,
\end{equation}
where
\begin{equation}
\widetilde\rho
=
(\sigma_y\otimes\sigma_y)
\rho^*
(\sigma_y\otimes\sigma_y)
\label{eq:spinflip}
\end{equation}
is the double spin flip of $\rho$ \cite{Wootters1998}.
\footnote{
For fixed local measurement directions $(\mathbf x,\mathbf y)$,
\[
p_{st}^{\widetilde\rho}(\mathbf x,\mathbf y)
=
p_{-s,-t}^{\rho}(\mathbf x,\mathbf y).
\]
Thus the double spin flip simultaneously exchanges the two binary outcome labels on both sides. The products
\[
p_{++}p_{--},
\qquad
p_{+-}p_{-+}
\]
are separately unchanged, and hence so is $Q$. This ambiguity does not imply that $\rho$ and $\widetilde\rho$ are physically identical as quantum states. See Ref.~\cite{Wootters1998} for the standard two-qubit spin flip.
}
Hence every non-product two-qubit state, including pure entangled states and rank-deficient mixed states, is reconstructible from its all-context margin-free dependence data up to this discrete ambiguity.
\end{theorem}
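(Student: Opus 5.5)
The plan is to write everything in Bloch form and turn the hypothesis \eqref{eq:allcontext-Q} into a polynomial identity on $S^2\times S^2$. We write
\[
\rho=\tfrac14\Bigl(I\otimes I+\mathbf a\cdot\bm\sigma\otimes I+I\otimes\mathbf b\cdot\bm\sigma+\textstyle\sum_{ij}T_{ij}\sigma_i\otimes\sigma_j\Bigr).
\]
Set $\alpha=\mathbf a\cdot\mathbf x$, $\beta=\mathbf b\cdot\mathbf y$ and $c=\mathbf x^{\mathsf T}T\mathbf y$. The Born rule gives $p_{st}=\tfrac14(1+s\alpha+t\beta+st\,c)$, and a short computation yields
\[
p_{++}p_{--}-p_{+-}p_{-+}=\tfrac14\,(c-\alpha\beta),\qquad
p_{++}p_{--}+p_{+-}p_{-+}=\tfrac18\,(1+c^2-\alpha^2-\beta^2).
\]
Hence $Q_\rho=2N_\rho/D_\rho$ with
\[
N_\rho(\mathbf x,\mathbf y)=\mathbf x^{\mathsf T}C_\rho\mathbf y,\qquad C_\rho:=T-\mathbf a\mathbf b^{\mathsf T},\qquad
D_\rho=1+c^2-\alpha^2-\beta^2 .
\]
Here $C_\rho$ is the correlation matrix, and $C_\rho\neq0$ exactly when $\rho$ is non-product. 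The double spin flip acts as $(\mathbf a,\mathbf b,T)\mapsto(-\mathbf a,-\mathbf b,T)$. This visibly leaves $N$ and $D$ unchanged, consistent with the footnote.

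The first step is positivity of the denominator, so that $Q$ is well defined. Suppose $p_{++}p_{--}+p_{+-}p_{-+}=0$ in some context. Then one entry vanishes in each of the two products, and the four possible combinations each make an entire row or an entire column zero. That means one party's outcome is deterministic, so $\rho_A$ or $\rho_B$ has a zero eigenvalue, i.e.\ it is pure. A state with a pure marginal is a product state, contradicting the hypothesis. Hence $D_\rho>0$ everywhere for non-product $\rho$.

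Next we cross-multiply to get $N_\rho D_\sigma=N_\sigma D_\rho$ on $S^2\times S^2$, and homogenize using $|\mathbf x|^2=|\mathbf y|^2=1$. Explicitly, $D_\rho=|\mathbf x|^2|\mathbf y|^2+(\mathbf x^{\mathsf T}T\mathbf y)^2-(\mathbf a\cdot\mathbf x)^2|\mathbf y|^2-|\mathbf x|^2(\mathbf b\cdot\mathbf y)^2$ becomes a biquadratic form. The identity then holds for bihomogeneous forms of bidegree $(3,3)$, hence on all of $\mathbb R^3\times\mathbb R^3$.

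Now fix $\mathbf y$. The identity reads $\ell\,q'=\ell'\,q$, where $\ell,\ell'$ are the linear forms $C_\rho\mathbf y,\,C_\sigma\mathbf y$ and $q,q'$ are the quadratic forms $D_\rho,D_\sigma$ in $\mathbf x$. The quadratic form $q$ is positive definite, since $D_\rho>0$ on the sphere. Therefore $\ell$ cannot divide $q$, and unique factorization forces $\ell'=\lambda(\mathbf y)\ell$ and $q'=\lambda(\mathbf y)q$ for generic $\mathbf y$. The same argument with the roles of $\mathbf x$ and $\mathbf y$ swapped, together with continuity, should give a constant $\lambda>0$ with
\[
C_\sigma=\lambda C_\rho,\qquad D_\sigma=\lambda D_\rho .
\]

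Finally we compare coefficients in $D_\sigma=\lambda D_\rho$ as biquadratic forms. Writing $\sigma$'s data as $(\mathbf a',\mathbf b',T')$, the identity is
\[
(1-\lambda)|\mathbf x|^2|\mathbf y|^2+(\mathbf x^{\mathsf T}T'\mathbf y)^2-\lambda(\mathbf x^{\mathsf T}T\mathbf y)^2
=(\mathbf a'\cdot\mathbf x)^2|\mathbf y|^2-\lambda(\mathbf a\cdot\mathbf x)^2|\mathbf y|^2+|\mathbf x|^2\bigl((\mathbf b'\cdot\mathbf y)^2-\lambda(\mathbf b\cdot\mathbf y)^2\bigr).
\]
Applying the Laplacian in $\mathbf y$ and then in $\mathbf x$ separates the three types of terms. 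The part of the form not of the shape $|\mathbf x|^2(\cdot)+(\cdot)|\mathbf y|^2$ isolates $T'\otimes T'-\lambda\,T\otimes T$. The aim is to derive:
\begin{itemize}[leftmargin=*]
\item $T'\otimes T'=\lambda\,T\otimes T$ up to trace-type terms;
\item $\mathbf a'\mathbf a'^{\mathsf T}=\lambda\,\mathbf a\mathbf a^{\mathsf T}$ and $\mathbf b'\mathbf b'^{\mathsf T}=\lambda\,\mathbf b\mathbf b^{\mathsf T}$, modulo multiples of $I$.
\end{itemize}
The multiples of $I$ are then fixed by the $|\mathbf x|^2|\mathbf y|^2$ coefficient. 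We then combine this with $T'-\mathbf a'\mathbf b'^{\mathsf T}=\lambda(T-\mathbf a\mathbf b^{\mathsf T})$ to conclude $\lambda=1$, $T'=T$, and $(\mathbf a',\mathbf b')=\pm(\mathbf a,\mathbf b)$ with a common sign. This gives $\sigma=\rho$ or $\sigma=\widetilde\rho$.

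I expect this last bookkeeping to be the main obstacle. The trace ambiguities from the homogenization must be controlled, a relative sign between $\mathbf a'$ and $\mathbf b'$ must be excluded, and the degenerate cases where $T$, $\mathbf a$ or $\mathbf b$ vanish or have low rank need separate handling. My first attempt would be to evaluate the identity on well-chosen pairs, e.g.\ $\mathbf y\perp\mathbf b$ with $\mathbf x$ along singular vectors of $T$. In the rank-deficient cases the needed signs would come from $C_\sigma=\lambda C_\rho\neq0$.
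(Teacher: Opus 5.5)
\paragraph{A genuine gap at the central step.} Your reduction is correct and runs parallel to the paper up to $N_\sigma=\lambda N_\rho$ and $D_\sigma=\lambda D_\rho$ with $\lambda>0$. The Born-rule formulas, the positivity of the denominator and the cross-multiplication are the same as in the paper. Your divisibility argument (a positive definite ternary quadratic form has no real linear factor, and swapping the roles of $\mathbf x$ and $\mathbf y$ makes $\lambda$ constant) is a valid substitute for the paper's ``equal zero sets force proportional bilinear forms'' plus density.

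The gap is everything after that, which is where the theorem actually lives. The conclusions $\lambda=1$, $\mathbf a'=\pm\mathbf a$, $\mathbf b'=\pm\mathbf b$, $T'=T$, and the exclusion of a relative sign are only announced as aims. The mechanism you propose also does not deliver them as described. Harmonic projection does not decouple the blocks: applying $\Delta_{\mathbf y}$ to your identity gives $3(\mathbf a'\mathbf a'^{\mathsf T}-\lambda\mathbf a\mathbf a^{\mathsf T})=T'T'^{\mathsf T}-\lambda TT^{\mathsf T}+\kappa I$ for a scalar $\kappa$. So your second bullet is tied to $T$, and does not hold modulo $I$ until $T'T'^{\mathsf T}-\lambda TT^{\mathsf T}$ has been controlled. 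Moreover, the $|\mathbf x|^2|\mathbf y|^2$ coefficient is a single scalar relation, so it cannot fix several independent multiples of the identity. Pinning those down, and with them $\lambda=1$ and $T'=\pm\sqrt\lambda\,T$, needs an idea the proposal does not contain.

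\paragraph{The missing ingredient.} It is a rank/inertia obstruction. A difference of two rank-one positive semidefinite forms has rank at most two, and at most one positive and one negative eigenvalue. So it can neither equal $(1-\lambda)I$ on $\mathbb R^3$ for $\lambda\neq1$, nor be definite on a plane. In your notation, the paper uses it as follows.
\begin{enumerate}
\item Take $\mathbf x_0\perp\mathbf a,\mathbf a'$ and $\mathbf y$ in the kernel of $N_\rho(\mathbf x_0,\cdot)$. Since $c=N+\alpha\beta$ and $N_\sigma=\lambda N_\rho$, the quantities $\alpha,\alpha',c,c'$ all vanish.
\item Then $D_\sigma=\lambda D_\rho$ collapses to $(\mathbf b'\cdot\mathbf y)^2-\lambda(\mathbf b\cdot\mathbf y)^2=1-\lambda$ on a plane (on all of $S^2$ if $N_\rho(\mathbf x_0,\cdot)\equiv0$). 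The obstruction then forces $\lambda=1$.
\item The same $\mathbf x_0$ with arbitrary $\mathbf y$ now gives $c=c'$, hence $(\mathbf b'\cdot\mathbf y)^2=(\mathbf b\cdot\mathbf y)^2$ and $\mathbf b'=\pm\mathbf b$. Symmetrically, $\mathbf a'=\pm\mathbf a$.
\item A relative sign would mean $T'=T-2\mathbf a\mathbf b^{\mathsf T}$. Substituting this into $D_\sigma=D_\rho$ gives $N_\rho(\mathbf x,\mathbf y)(\mathbf a\cdot\mathbf x)(\mathbf b\cdot\mathbf y)\equiv0$, which is impossible by density when both Bloch vectors are nonzero.
\end{enumerate}

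\paragraph{How your route could be completed.} The same obstruction would also rescue your coefficient approach. For fixed $\mathbf x$, read your identity as a $3\times3$ matrix identity in $\mathbf y$; its $T$-part $T'^{\mathsf T}\mathbf x\mathbf x^{\mathsf T}T'-\lambda T^{\mathsf T}\mathbf x\mathbf x^{\mathsf T}T$ has rank at most two.
\begin{enumerate}
\item A determinant-plus-continuity argument then forces $\mathbf a'\mathbf a'^{\mathsf T}=\lambda\mathbf a\mathbf a^{\mathsf T}$.
\item A short rank case analysis forces $\mathbf b'\mathbf b'^{\mathsf T}-\lambda\mathbf b\mathbf b^{\mathsf T}=(1-\lambda)I$, hence $\lambda=1$, and also $T'=\pm T$.
\item Finally, $T'-\mathbf a'\mathbf b'^{\mathsf T}=T-\mathbf a\mathbf b^{\mathsf T}$ together with non-productness fixes the signs.
\end{enumerate}
As written, however, the central step of the proof is not proved.
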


\begin{proof}
Write a general two-qubit state in the standard Fano--Bloch form \cite{Fano1983,SchlienzMahler1995},
\begin{equation}
\rho
=
\frac14
\left[
I\otimes I
+
(\mathbf a\cdot\boldsymbol\sigma)\otimes I
+
I\otimes(\mathbf b\cdot\boldsymbol\sigma)
+
\sum_{i,j=1}^{3}
T_{ij}\,
\sigma_i\otimes\sigma_j
\right].
\label{eq:fano}
\end{equation}
Here $\mathbf a$ and $\mathbf b$ are the Bloch vectors of the reduced states and
\begin{equation}
T_{ij}
=
\Tr\!\left(
\rho\,\sigma_i\otimes\sigma_j
\right)
\end{equation}
is the two-body correlation matrix. Since
$\{I,\sigma_x,\sigma_y,\sigma_z\}^{\otimes2}$
is a basis of the two-qubit operator space, the data
$(\mathbf a,\mathbf b,T)$ determine $\rho$ uniquely.

Choose arbitrary local Bloch-sphere directions
\begin{equation}
\mathbf x,\mathbf y\in S^2.
\end{equation}
The corresponding binary projectors are
\begin{equation}
\Pi_s^{(\mathbf x)}
=
\frac12
\left(
I+s\,\mathbf x\cdot\boldsymbol\sigma
\right),
\qquad
\Pi_t^{(\mathbf y)}
=
\frac12
\left(
I+t\,\mathbf y\cdot\boldsymbol\sigma
\right),
\end{equation}
with $s,t=\pm1$. The Born rule gives
\begin{equation}
p_{st}^{\rho}(\mathbf x,\mathbf y)
=
\frac14
\left(
1+sA_\rho+tB_\rho+stC_\rho
\right),
\label{eq:born}
\end{equation}
where
\begin{equation}
A_\rho
=
\mathbf a_\rho\cdot\mathbf x,
\qquad
B_\rho
=
\mathbf b_\rho\cdot\mathbf y,
\qquad
C_\rho
=
\mathbf x^TT_\rho\mathbf y.
\label{eq:ABC}
\end{equation}

We first note that a non-product two-qubit state necessarily has full-rank reduced states. Indeed, if
\begin{equation}
\rho_A=|a\rangle\langle a|
\end{equation}
had rank one, then
\begin{equation}
\Tr\!\left[
\rho\,
\bigl(
(I-|a\rangle\langle a|)\otimes I
\bigr)
\right]
=0.
\end{equation}
Since $\rho\ge0$, its support would then lie entirely in
\begin{equation}
|a\rangle\otimes\mathcal H_B,
\end{equation}
and hence
\begin{equation}
\rho
=
|a\rangle\langle a|
\otimes\rho_B,
\end{equation}
contrary to the assumption that $\rho$ is non-product. Thus
\begin{equation}
\rho_A>0,
\qquad
\rho_B>0.
\label{eq:reduced-positive}
\end{equation}
The same holds for $\sigma$.

Consequently, in every local measurement context all row and column marginals of the Born table are strictly positive, although individual entries may vanish.

Introduce the connected correlation
\begin{equation}
\Gamma_\rho(\mathbf x,\mathbf y)
:=
C_\rho-A_\rho B_\rho.
\label{eq:gamma}
\end{equation}
It is precisely
\begin{align}
\Gamma_\rho(\mathbf x,\mathbf y)
={}&
\left\langle
(\mathbf x\cdot\boldsymbol\sigma)
\otimes
(\mathbf y\cdot\boldsymbol\sigma)
\right\rangle_\rho
\nonumber\\
&-
\left\langle
(\mathbf x\cdot\boldsymbol\sigma)\otimes I
\right\rangle_\rho
\left\langle
I\otimes
(\mathbf y\cdot\boldsymbol\sigma)
\right\rangle_\rho,
\end{align}
and is bilinear in the measurement directions:
\begin{equation}
\Gamma_\rho(\mathbf x,\mathbf y)
=
\mathbf x^T
\left(
T_\rho-\mathbf a_\rho\mathbf b_\rho^T
\right)
\mathbf y.
\label{eq:gamma-bilinear}
\end{equation}

We also define
\begin{equation}
D_\rho(\mathbf x,\mathbf y)
:=
1+C_\rho^2-A_\rho^2-B_\rho^2.
\label{eq:D}
\end{equation}
Using Eq.~\eqref{eq:born},
\begin{equation}
D_\rho
=
8\left(
p_{++}^{\rho}p_{--}^{\rho}
+
p_{+-}^{\rho}p_{-+}^{\rho}
\right).
\label{eq:Dprob}
\end{equation}
If $D_\rho=0$, both products on the right must vanish. For a nonnegative $2\times2$ table this would force at least one complete row or one complete column to vanish, contradicting Eq.~\eqref{eq:reduced-positive}. Hence
\begin{equation}
D_\rho(\mathbf x,\mathbf y)>0
\qquad
\forall\,\mathbf x,\mathbf y.
\label{eq:Dpositive}
\end{equation}
The same argument applies to $\sigma$.

Equation~\eqref{eq:Qdef} can now be evaluated directly from Eq.~\eqref{eq:born}. One finds
\begin{equation}
Q_\rho(\mathbf x,\mathbf y)
=
\frac{2\Gamma_\rho(\mathbf x,\mathbf y)}
{D_\rho(\mathbf x,\mathbf y)}.
\label{eq:QGD}
\end{equation}
Thus the all-context equality \eqref{eq:allcontext-Q} is equivalent to
\begin{equation}
\Gamma_\rho D_\sigma
=
\Gamma_\sigma D_\rho.
\label{eq:cross}
\end{equation}

Since $D_\rho,D_\sigma>0$, Eq.~\eqref{eq:cross} implies
\begin{equation}
\Gamma_\rho(\mathbf x,\mathbf y)=0
\quad\Longleftrightarrow\quad
\Gamma_\sigma(\mathbf x,\mathbf y)=0.
\label{eq:samezeros}
\end{equation}
Two nonzero bilinear forms with the same zero set are proportional.
\footnote{
For fixed $\mathbf x$, the two linear functionals of $\mathbf y$ have the same kernel and are therefore proportional. If the bilinear form has rank at least two, choose two values of $\mathbf x$ whose images are linearly independent; considering their sum then forces the proportionality constants to agree, and hence to be independent of $\mathbf x$. In the rank-one case the bilinear forms factorize, and equality of their zero sets fixes the two factors separately up to nonzero scalars. Thus the two bilinear forms differ only by one global constant.
}
Since neither $\rho$ nor $\sigma$ is a product state, neither $\Gamma_\rho$ nor $\Gamma_\sigma$ is identically zero. Hence there exists a nonzero constant $\lambda$ such that
\begin{equation}
\Gamma_\sigma(\mathbf x,\mathbf y)
=
\lambda\,
\Gamma_\rho(\mathbf x,\mathbf y)
\qquad
\forall\,\mathbf x,\mathbf y.
\label{eq:lambda-gamma}
\end{equation}

Substituting Eq.~\eqref{eq:lambda-gamma} into Eq.~\eqref{eq:cross} gives
\begin{equation}
\Gamma_\rho
\left(
D_\sigma-\lambda D_\rho
\right)
=
0.
\label{eq:lambda-pre}
\end{equation}
Thus
\begin{equation}
D_\sigma=\lambda D_\rho
\end{equation}
wherever $\Gamma_\rho\neq0$. Since $\Gamma_\rho$ is a nonzero bilinear form, its nonzero set is dense in $S^2\times S^2$. Both sides are continuous in the measurement directions, so
\begin{equation}
D_\sigma
=
\lambda D_\rho
\qquad
\forall\,\mathbf x,\mathbf y.
\label{eq:lambda-factor}
\end{equation}
Since $D_\rho,D_\sigma>0$, we have
\begin{equation}
\lambda>0.
\end{equation}

We now show that in fact $\lambda=1$. Choose a unit vector $\mathbf x_0$ satisfying
\begin{equation}
\mathbf a_\rho\cdot\mathbf x_0
=
\mathbf a_\sigma\cdot\mathbf x_0
=
0.
\label{eq:x0}
\end{equation}
Such a direction always exists in the three-dimensional Bloch space.

First suppose that the linear functional
\begin{equation}
\mathbf y
\longmapsto
\Gamma_\rho(\mathbf x_0,\mathbf y)
\end{equation}
is not identically zero. Its kernel is a two-dimensional plane. For any unit vector $\mathbf y$ in this plane,
\begin{equation}
A_\rho=A_\sigma=0,
\qquad
\Gamma_\rho=\Gamma_\sigma=0.
\end{equation}
Since $\Gamma=C-AB$, this gives
\begin{equation}
C_\rho=C_\sigma=0.
\end{equation}
Equation~\eqref{eq:lambda-factor} therefore reduces to
\begin{equation}
(\mathbf b_\sigma\cdot\mathbf y)^2
-
\lambda
(\mathbf b_\rho\cdot\mathbf y)^2
=
1-\lambda.
\label{eq:planequad}
\end{equation}

If $0<\lambda<1$, the quadratic form on the left would be positive definite on this two-dimensional plane. But it is the difference of two rank-one positive semidefinite quadratic forms and therefore has positive inertia index at most one. This is impossible. If $\lambda>1$, the same argument applied to the negative inertia index gives the same contradiction. Hence
\begin{equation}
\lambda=1.
\end{equation}

If instead
\begin{equation}
\Gamma_\rho(\mathbf x_0,\mathbf y)\equiv0,
\end{equation}
then Eq.~\eqref{eq:lambda-factor} gives Eq.~\eqref{eq:planequad} for every unit vector $\mathbf y$. If $\lambda\neq1$, this would imply
\begin{equation}
\mathbf b_\sigma\mathbf b_\sigma^T
-
\lambda\mathbf b_\rho\mathbf b_\rho^T
=
(1-\lambda)I.
\end{equation}
The left-hand side has rank at most two, whereas the right-hand side has rank three. This is impossible. Therefore
\begin{equation}
\lambda=1.
\label{eq:lambdaone}
\end{equation}

Thus
\begin{equation}
\Gamma_\sigma(\mathbf x,\mathbf y)
=
\Gamma_\rho(\mathbf x,\mathbf y)
\qquad
\forall\,\mathbf x,\mathbf y,
\label{eq:gammaequal}
\end{equation}
and Eq.~\eqref{eq:lambda-factor} becomes
\begin{equation}
C_\sigma^2-A_\sigma^2-B_\sigma^2
=
C_\rho^2-A_\rho^2-B_\rho^2.
\label{eq:squareeq}
\end{equation}

Choose again $\mathbf x_0$ satisfying Eq.~\eqref{eq:x0}, now with arbitrary $\mathbf y$. Since
\begin{equation}
A_\rho=A_\sigma=0
\end{equation}
and $\Gamma_\rho=\Gamma_\sigma$, we have
\begin{equation}
C_\rho=C_\sigma.
\end{equation}
Equation~\eqref{eq:squareeq} therefore gives
\begin{equation}
(\mathbf b_\sigma\cdot\mathbf y)^2
=
(\mathbf b_\rho\cdot\mathbf y)^2
\qquad
\forall\,\mathbf y,
\end{equation}
and hence
\begin{equation}
\mathbf b_\sigma
=
\pm\mathbf b_\rho.
\label{eq:bsign}
\end{equation}
Exchanging the two subsystems similarly gives
\begin{equation}
\mathbf a_\sigma
=
\pm\mathbf a_\rho.
\label{eq:asign}
\end{equation}

On the other hand, Eqs.~\eqref{eq:gammaequal} and \eqref{eq:gamma-bilinear} imply
\begin{equation}
T_\sigma
-
\mathbf a_\sigma\mathbf b_\sigma^T
=
T_\rho
-
\mathbf a_\rho\mathbf b_\rho^T.
\label{eq:Tconnected}
\end{equation}
Writing
\begin{equation}
\mathbf a_\sigma
=
\varepsilon_A\mathbf a_\rho,
\qquad
\mathbf b_\sigma
=
\varepsilon_B\mathbf b_\rho,
\qquad
\varepsilon_A,\varepsilon_B=\pm1,
\end{equation}
we obtain
\begin{equation}
T_\sigma
=
T_\rho
+
(\varepsilon_A\varepsilon_B-1)
\mathbf a_\rho\mathbf b_\rho^T.
\label{eq:Tsigma}
\end{equation}

Suppose that the two nonzero Bloch vectors acquire opposite signs,
\begin{equation}
\varepsilon_A\varepsilon_B=-1.
\end{equation}
Substitution of Eq.~\eqref{eq:Tsigma} into Eq.~\eqref{eq:squareeq} gives
\begin{equation}
\Gamma_\rho(\mathbf x,\mathbf y)
(\mathbf a_\rho\cdot\mathbf x)
(\mathbf b_\rho\cdot\mathbf y)
=
0
\qquad
\forall\,\mathbf x,\mathbf y.
\label{eq:mixedsign}
\end{equation}
For a non-product state $\Gamma_\rho$ is not identically zero. If both $\mathbf a_\rho$ and $\mathbf b_\rho$ are nonzero, the nonzero sets of the three factors in Eq.~\eqref{eq:mixedsign} are open and dense, so one can choose $\mathbf x,\mathbf y$ for which all three factors are nonzero, a contradiction. If one Bloch vector vanishes, its sign is immaterial and no additional state-level branch arises.

Thus only
\begin{equation}
\varepsilon_A=\varepsilon_B=1
\qquad\text{or}\qquad
\varepsilon_A=\varepsilon_B=-1
\end{equation}
remain. The first gives
\begin{equation}
(\mathbf a_\sigma,\mathbf b_\sigma,T_\sigma)
=
(\mathbf a_\rho,\mathbf b_\rho,T_\rho),
\end{equation}
and hence
\begin{equation}
\sigma=\rho.
\end{equation}
The second gives
\begin{equation}
(\mathbf a_\sigma,\mathbf b_\sigma,T_\sigma)
=
(-\mathbf a_\rho,-\mathbf b_\rho,T_\rho),
\end{equation}
which is precisely the double spin flip,
\begin{equation}
\sigma=\widetilde\rho.
\end{equation}
\end{proof}

This result suggests a particular way of identifying a quantum state. In any fixed measurement context, discarding the marginals of the Born distribution certainly loses information. The theorem shows, however, that information lost context by context need not remain lost after all contexts are considered together. The probability distributions in different contexts must all arise from the same quantum state; it is precisely this cross-context compatibility that allows a family of margin-free dependence structures to constrain, and in the above sense reconstruct, the original state.

Product states provide the natural degenerate boundary of this reconstruction problem. If
\begin{equation}
\rho=\rho_A\otimes\rho_B,
\end{equation}
then the Born probabilities factorize in every local context,
\begin{equation}
p_{st}^{\rho}(\mathbf x,\mathbf y)
=
p_s^A(\mathbf x)\,
p_t^B(\mathbf y),
\end{equation}
and therefore
\begin{equation}
p_{++}^{\rho}p_{--}^{\rho}
=
p_{+-}^{\rho}p_{-+}^{\rho}
\qquad
\forall\,\mathbf x,\mathbf y.
\label{eq:product-cross}
\end{equation}
Conversely, if Eq.~\eqref{eq:product-cross} holds in every local context, then Eq.~\eqref{eq:QGD} gives
\begin{equation}
\Gamma_\rho\equiv0,
\end{equation}
and hence
\begin{equation}
T_\rho
=
\mathbf a_\rho\mathbf b_\rho^T.
\end{equation}
The Fano--Bloch expansion then factorizes as
\begin{equation}
\rho
=
\frac12
\left(
I+\mathbf a_\rho\cdot\boldsymbol\sigma
\right)
\otimes
\frac12
\left(
I+\mathbf b_\rho\cdot\boldsymbol\sigma
\right)
=
\rho_A\otimes\rho_B.
\end{equation}
Thus the absence of nontrivial binary dependence in all local contexts is equivalent to the state being product.

As another simple case, suppose that both reduced states are maximally mixed,
\begin{equation}
\mathbf a=\mathbf b=0.
\end{equation}
Every Born table then already has uniform marginals,
\begin{equation}
p_{st}
=
\frac14
\left(
1+st\,\mathbf x^TT\mathbf y
\right).
\end{equation}
In this case,
\begin{equation}
D=1+C^2,
\qquad
\Gamma=C,
\end{equation}
and therefore
\begin{equation}
Q_\rho(\mathbf x,\mathbf y)
=
\frac{2\,\mathbf x^TT\mathbf y}
{1+(\mathbf x^TT\mathbf y)^2}.
\label{eq:QLMM}
\end{equation}
Since the map
\begin{equation}
c\longmapsto\frac{2c}{1+c^2}
\end{equation}
is one-to-one on $[-1,1]$, the all-context $Q$ data directly determine
\begin{equation}
\mathbf x^TT\mathbf y
\end{equation}
for every pair of local directions, and hence recover the full correlation matrix $T$.

\section{Copula entropy and quantum discord}
\label{sec:discord}

The preceding section retained the complete dependence nucleus in each Born-rule measurement context and considered the information contained in the whole family of such structures. We now consider a coarser description: rather than keeping the full dependence structure in each context, we compress the amount of dependence to a single number.

The starting point remains the same. Given an $N$-partite quantum state $\rho$ and a collection of local measurement bases
\begin{equation}
B=(\mathcal B_1,\ldots,\mathcal B_N),
\end{equation}
the Born rule produces a classical joint distribution
\begin{equation}
p_{\rho,B}(i_1,\ldots,i_N)
=\Tr\!\left[
\rho\,
\Pi_{i_1}^{(1)}\otimes\cdots\otimes\Pi_{i_N}^{(N)}
\right].
\end{equation}
Once $B$ is fixed, the dependence analysis of this table is an ordinary classical probability problem. We may therefore first ask: how much dependence is contained in the classical statistical table associated with this one context?

For continuous random variables, the copula entropy is defined by \cite{MaSun2011}
\begin{equation}
H_c(C)=-\int_{[0,1]^N}c(\mathbf u)\log c(\mathbf u)\,\dd^N\mathbf u,
\label{eq:copulaentropy}
\end{equation}
where $c$ is the copula density. A standard identity gives
\begin{equation}
-H_c(C)=\sum_{k=1}^{N}H(X_k)-H(X_1,\ldots,X_N).
\label{eq:ce-mi}
\end{equation}
The right-hand side is the classical total correlation, also called multi-information. Writing
\begin{equation}
\Tcl(p)=\sum_{k=1}^{N}H(p_k)-H(p),
\end{equation}
Eq.~\eqref{eq:ce-mi} becomes
\begin{equation}
\Tcl(p)=-H_c(C).
\label{eq:tcl-hc}
\end{equation}
Thus copula entropy does not define a scalar independent of total correlation. Its significance here is structural: Eq.~\eqref{eq:tcl-hc} interprets the same quantity entirely as an entropy of the margin-free dependence structure itself.\footnote{Born probabilities in finite-dimensional quantum systems are discrete. Because the discrete Sklar copula is not unique, we use $\Tcl(p)$ directly as the scalar dependence information in this section. One may alternatively adopt the standard multilinear (checkerboard) copula extension; in that representation, $-H_c$ recovers the classical total correlation exactly. See Ref.~\cite{GenestNeslehovaRemillard2017}.}

For a fixed measurement context $B$, we therefore obtain $\Tcl(p_{\rho,B})$. This quantity generally depends on the arbitrary choice of local measurement bases. If we wish to obtain a scalar associated with the quantum state itself, the most direct way to remove this context dependence is to optimize over all local contexts:
\begin{equation}
C_{\max}(\rho)=\max_B\Tcl\!\left(p_{\rho,B}\right).
\label{eq:Cmax}
\end{equation}
It is the largest classical total correlation that can be made manifest in a single local measurement context.

The quantum total correlation of the state itself is
\begin{equation}
\Tq(\rho)
=\sum_{k=1}^{N}S(\rho_k)-S(\rho)
=D\!\left(\rho\middle\|\bigotimes_{k=1}^{N}\rho_k\right).
\label{eq:Tq}
\end{equation}
The difference
\begin{equation}
Q(\rho)=\Tq(\rho)-C_{\max}(\rho)
\label{eq:Qdef2}
\end{equation}
has a direct interpretation: it is the difference between the state's total quantum correlation and the largest classical total correlation visible in any single local measurement context.

This definition does not produce a new quantity. In fact, it is precisely the familiar global quantum discord of Rulli and Sarandy \cite{RulliSarandy2011}. Let
\begin{equation}
\Phi_B(\rho)
=\sum_{\mathbf i}
\Pi_{\mathbf i}^{(B)}\rho\Pi_{\mathbf i}^{(B)}
\label{eq:dephase}
\end{equation}
be the nonselective local projective measurement in the product basis $B$, equivalently complete dephasing in that basis. It removes all off-diagonal matrix elements of $\rho$ in $B$ and leaves
\begin{equation}
\Phi_B(\rho)
=\sum_{\mathbf i}p_{\rho,B}(\mathbf i)\,\Pi_{\mathbf i}^{(B)}.
\label{eq:dephasedtable}
\end{equation}
Its eigenvalues are therefore precisely the Born probabilities, and hence
\begin{equation}
\Tq\!\left(\Phi_B(\rho)\right)
=\Tcl\!\left(p_{\rho,B}\right).
\label{eq:Tdephase}
\end{equation}
The global quantum discord can be written as \cite{RulliSarandy2011}
\begin{equation}
\GQD(\rho)
=\min_B\left[
\Tq(\rho)-\Tq\!\left(\Phi_B(\rho)\right)
\right].
\label{eq:GQDdef}
\end{equation}
Therefore
\begin{equation}
\GQD(\rho)
=\Tq(\rho)-\max_B\Tcl\!\left(p_{\rho,B}\right),
\label{eq:GQDresult}
\end{equation}
which is exactly Eq.~\eqref{eq:Qdef2}.

From the present viewpoint, the appearance of global quantum discord is quite natural. We do not begin with a pre-existing definition of quantum correlation and then seek a copula representation of it. Instead, we perform a sequence of elementary steps: use the Born rule to obtain classical joint statistics, separate marginals from dependence, and then remove the arbitrariness of the local measurement context. The scalar obtained along this route happens to return us to the established quantity of global quantum discord.

At the same time, Section~\ref{sec:reconstruction} shows that this scalar is not the full object we started from. A quantum state first determines the context-indexed family
\begin{equation}
\mathfrak D_\rho
=\left\{\mathcal D\!\left(p_{\rho,B}\right)\right\}_B,
\end{equation}
whereas $\max_B\Tcl(p_{\rho,B})$ is only one scalar invariant extracted from this family. Just as a function generally contains more information than its maximum value, the full all-context dependence family is much richer than this optimization. The reconstruction theorem of Section~\ref{sec:reconstruction} provides a concrete illustration of this distinction.

\section{Outlook}
\label{sec:outlook}

The complete reconstruction result established here concerns only two qubits. From the viewpoint of this paper, two extensions are particularly natural.

\subsection{Multiqubit systems}
\label{subsec:multiqubit}

The conceptual extension to multipartite systems is nearly immediate. This directness is worth noting. A correlation construction that is well defined for two parties need not, by itself, specify a unique multipartite continuation; additional structural choices can enter. By contrast, the starting point of the present approach has no analogous ambiguity at this stage: once a local measurement context is chosen for $N$ quantum subsystems, the Born rule produces an ordinary $N$-variable joint probability distribution. The passage from bivariate to multivariate dependence is already part of classical probability theory. Thus the marginal--dependence separation revealed by Sklar's theorem does not suffer a conceptual discontinuity when one passes from two variables to many \cite{Sklar1959,Nelsen2006}. The genuinely new difficulty is instead whether these multivariate dependence structures, taken over all quantum measurement contexts, still suffice to identify the underlying quantum state.

For multiqubit systems, this extension can be pursued along the dependence-nucleus viewpoint already adopted in Section~\ref{subsec:binary}. Following Geenens, we characterize margin-free dependence in a discrete probability table through invariance under positive local reweightings; the odds ratio of the binary $2\times2$ case is the simplest coordinate of this structure \cite{Geenens2020}. For general multivariate binary tables, the standard log-linear interaction parameters of multiway contingency-table theory provide a natural technical language for organizing the same type of discrete dependence structure \cite{Agresti2013,DarrochLauritzenSpeed1980}.

Our preliminary investigations suggest that this language also exposes a structure particularly suited to quantum reconstruction: under conditioning, higher-order $N$-body dependence data naturally induce conditional $(N-1)$-body dependence data. In quantum mechanics, fixing a local measurement outcome likewise produces a conditional state on the remaining subsystems. This points to a concrete recursive route to all-context dependence reconstruction for multiqubit systems. The full problem, including the compatibility of the resulting conditional states and the weakest conditions under which the global state can be recovered, is nontrivial and will be left for future work.

\subsection{Continuous-variable systems}
\label{subsec:cv}

Another natural direction is continuous-variable quantum systems. From the statistical viewpoint adopted here, the continuous case has an immediate advantage: when the marginal distributions are continuous, the copula in Sklar's theorem is uniquely determined by the joint distribution. Thus, in each fixed measurement context, one no longer needs an additional object such as the dependence nucleus to represent margin-free dependence; the copula itself can play this role directly.

The subtlety shifts to the quantum side. For qubits, we take the local contexts to be all local binary projective measurements, a family with a simple Bloch-sphere parametrization. For a general continuous-variable system, there is no equally immediate unique choice of what should count as the corresponding family of ``all local contexts.''

A natural and standard first step is to consider a collection of bosonic modes and restrict the local contexts to quadrature measurements. For the $i$th mode, let $(\widehat X_i,\widehat P_i)$ be a canonical pair of quadratures and define the rotated quadrature
\begin{equation}
\widehat X_{\theta_i}^{(i)}
=\widehat X_i\cos\theta_i+\widehat P_i\sin\theta_i.
\end{equation}
The position-like and momentum-like quadratures are the special cases $\theta_i=0$ and $\theta_i=\pi/2$. For $N$ modes, a set of local angles
\begin{equation}
\boldsymbol\theta=(\theta_1,\ldots,\theta_N)
\end{equation}
defines a local measurement context and, through the Born rule, a continuous joint density
\begin{equation}
p_{\rho,\boldsymbol\theta}(x_1,\ldots,x_N).
\end{equation}
This choice is motivated by homodyne tomography. For a single mode, Vogel and Risken showed that the full set of rotated-quadrature probability distributions suffices to reconstruct the quantum state by phase-space tomography \cite{VogelRisken1989}; see also Ref.~\cite{LvovskyRaymer2009}. The corresponding multimode construction uses the joint local quadrature statistics.

The new question from the present viewpoint is therefore not whether the quadrature statistics are sufficient for state reconstruction, but what remains if, in every quadrature context, all local marginals are discarded and only the corresponding copula is kept. For two bosonic modes, for example, one may consider
\begin{equation}
\rho_{AB}\longmapsto
\left\{C_{\rho;\theta_A,\theta_B}\right\}_{\theta_A,\theta_B},
\label{eq:cvmap}
\end{equation}
where $C_{\rho;\theta_A,\theta_B}$ is the unique copula of the joint quadrature distribution
\begin{equation}
p_\rho(x_A,x_B\mid\theta_A,\theta_B).
\end{equation}
The continuous-variable reconstruction problem is then whether this all-context copula family still identifies the original state, or equivalently, how much state information remains in the complete cross-context dependence structure after the local marginal information has been discarded context by context.

The problem is structurally close to the two-qubit setting studied here, but it also exhibits an interesting reversal: the dependence object in a single context is cleaner because the copula is unique, while the quantum side requires a further choice of which family of contexts should be regarded as natural. Quadrature measurements provide a particularly natural starting point because of their established tomographic role. Whether one should enlarge the context family to more general local measurements, and whether doing so changes the state-separating power of the all-context dependence family, remain open questions.

\section{Conclusion}

We have applied the marginal--dependence separation revealed by Sklar's theorem to Born statistics across local measurement contexts. For two qubits, although marginal information is discarded separately in every context, the full family of dependence nuclei reconstructs any non-product state up to at most a double spin flip; product states form the natural relationally degenerate case. When the dependence in each context is compressed to total correlation and optimized over local contexts, the resulting quantity is precisely global quantum discord. Thus, information lost in each individual statistical manifestation need not remain lost when the entire compatible family of manifestations is considered.

From a broader perspective, the idea of identifying a composite quantum state 
from relational data among its subsystems already has several important 
precedents. For example, Mermin showed that a complete set of subsystem 
correlations determines the density operator of the composite system 
\cite{Mermin1998}. A related question appears in the literature on quantum 
self-testing, where one asks whether observed correlations can uniquely 
determine the underlying quantum state and measurements 
\cite{SupicBowles2020}. The question considered here is different: even after 
each Born table is individually stripped of its marginals, does the entire 
context-indexed dependence family still determine $\rho$?

We also note that a notion of quantum copula has previously been proposed 
\cite{LovasAndai2019}. There, the construction defines a quantum analogue of a 
copula directly at the level of the quantum state. This is distinct from the 
context-wise Born-statistical construction considered here.

\paragraph*{Acknowledgement}
This work is supported by the NSFC Grant No.1250050230.

\appendix
\section{Sklar's theorem and copulas}
\label{app:sklar}

Copula theory asks how the part of a joint probability distribution determined by the individual random variables can be separated from the part that describes how the variables depend on one another. We briefly review the facts needed in the main text. Standard references are Sklar's original paper and Nelsen's monograph \cite{Sklar1959,Nelsen2006}.

\subsection{Joint and marginal distributions}

Let
\begin{equation}
\mathbf X=(X_1,\ldots,X_N)
\end{equation}
be $N$ real-valued random variables with joint cumulative distribution function
\begin{equation}
F(x_1,\ldots,x_N)
=\Pr(X_1\le x_1,\ldots,X_N\le x_N).
\end{equation}
The marginal cumulative distribution function of the $i$th variable is
\begin{equation}
F_i(x_i)=\Pr(X_i\le x_i).
\end{equation}
The joint distribution determines all marginals, but the converse is false: even when every $F_i$ is fixed, the way in which the variables vary together is not determined. Copula theory is designed to represent this remaining structure; when the marginals are continuous, it does so uniquely.

\subsection{Sklar's theorem}

Sklar's theorem gives the exact form of this separation. For any $N$-dimensional joint distribution $F$, there exists an $N$-dimensional copula $C$ such that
\begin{equation}
F(x_1,\ldots,x_N)
=C\!\left(F_1(x_1),\ldots,F_N(x_N)\right).
\label{eq:sklar}
\end{equation}
Here
\begin{equation}
C:[0,1]^N\to[0,1]
\end{equation}
is itself a joint distribution function whose one-dimensional marginals are all uniform on $[0,1]$. Conversely, given one-dimensional distribution functions $F_1,\ldots,F_N$ and any $N$-copula $C$, Eq.~\eqref{eq:sklar} defines a valid joint distribution with those marginals.

The crucial point for the present paper is uniqueness: if all marginal distributions $F_i$ are continuous, then $C$ is unique. In the continuous case one may therefore write, in a precise sense,
\begin{equation}
\text{joint distribution}=\text{marginals}+\text{copula},
\end{equation}
where the marginals describe how the variables are distributed individually and the copula describes how they depend on one another.

\subsection{Uniformization and the meaning of a copula}

In the continuous case, the same separation can be seen directly through the probability-integral transform. Define
\begin{equation}
U_i=F_i(X_i).
\end{equation}
If $F_i$ is continuous, then
\begin{equation}
U_i\sim\mathrm{Uniform}(0,1).
\end{equation}
Thus the transformation
\begin{equation}
\mathbf X=(X_1,\ldots,X_N)
\longmapsto
\mathbf U=(U_1,\ldots,U_N)
\end{equation}
removes the individual shape of every marginal, since all $U_i$ have the same uniform distribution. The dependence among the variables is not removed. Indeed,
\begin{equation}
\Pr(U_1\le u_1,\ldots,U_N\le u_N)
=C(u_1,\ldots,u_N).
\end{equation}
The copula can therefore be understood directly as the joint distribution that remains after all marginals have been uniformized.

This also explains why the copula is insensitive to the individual scales of the variables. Separate strictly increasing reparametrizations of the variables do not change their copula.

If the variables are independent,
\begin{equation}
F(x_1,\ldots,x_N)=\prod_{i=1}^{N}F_i(x_i),
\end{equation}
and the corresponding independence copula is
\begin{equation}
C_{\perp}(u_1,\ldots,u_N)=\prod_{i=1}^{N}u_i.
\label{eq:indcopula}
\end{equation}
Thus nontrivial dependence is encoded by departure from $C_{\perp}$.

\subsection{Copula density}

Suppose the joint distribution and all marginals are absolutely continuous, with densities
\begin{equation}
p(x_1,\ldots,x_N),
\qquad
p_i(x_i),
\end{equation}
and suppose the copula also has a density
\begin{equation}
c(u_1,\ldots,u_N)
=\frac{\partial^N C}{\partial u_1\cdots\partial u_N}.
\end{equation}
Differentiating Eq.~\eqref{eq:sklar} gives
\begin{equation}
p(x_1,\ldots,x_N)
=c\!\left(F_1(x_1),\ldots,F_N(x_N)\right)
\prod_{i=1}^{N}p_i(x_i).
\label{eq:copuladensity}
\end{equation}
In particular, $c=1$ almost everywhere corresponds to independence. Equation~\eqref{eq:copuladensity} makes the marginal--dependence separation especially transparent:
\begin{equation}
\text{joint density}=\text{copula density}\times\text{marginal densities}.
\end{equation}

\subsection{Copula entropy and total correlation}

The copula entropy used in Section~\ref{sec:discord} is, for a continuous copula with density $c(\mathbf u)$, defined by
\begin{equation}
H_c=-\int_{[0,1]^N}c(\mathbf u)\log c(\mathbf u)\,\dd^N\mathbf u.
\label{eq:appHc}
\end{equation}
The classical total correlation of the random vector $\mathbf X$ is
\begin{equation}
\Tcl(\mathbf X)
=\KL\!\left(
p(x_1,\ldots,x_N)
\middle\|
\prod_i p_i(x_i)
\right),
\end{equation}
that is,
\begin{equation}
\Tcl
=\int p(\mathbf x)
\log\frac{p(\mathbf x)}{\prod_i p_i(x_i)}\,
\dd^N\mathbf x.
\label{eq:appTC}
\end{equation}
Using Eq.~\eqref{eq:copuladensity},
\begin{equation}
\frac{p(\mathbf x)}{\prod_i p_i(x_i)}=c(\mathbf u),
\end{equation}
and under the change of variables $u_i=F_i(x_i)$,
\begin{equation}
\dd^N\mathbf u
=\prod_i p_i(x_i)\,\dd^N\mathbf x.
\end{equation}
Therefore
\begin{align}
\Tcl
&=\int_{[0,1]^N}c(\mathbf u)\log c(\mathbf u)\,\dd^N\mathbf u\\
&=-H_c,
\end{align}
and hence
\begin{equation}
\Tcl=-H_c.
\label{eq:appTCHc}
\end{equation}
This is the relation between copula entropy and classical total correlation used in the main text. Ma and Sun expressed the same identity as ``mutual information is negative copula entropy'' \cite{MaSun2011}; in the multivariate setting, we use the less ambiguous terminology total correlation or multi-information.

\subsection{The discrete case}

Sklar's theorem itself is not restricted to continuous variables. For a discrete joint distribution, there still exists a copula $C$ satisfying Eq.~\eqref{eq:sklar}. The difference is that the range of each $F_i$ covers only a discrete subset of $[0,1]$, so the joint distribution fixes $C$ only on
\begin{equation}
\operatorname{Ran}(F_1)\times\cdots\times\operatorname{Ran}(F_N).
\end{equation}
Extending the resulting subcopula to all of $[0,1]^N$ is generally nonunique. Consequently, the copula is not uniquely determined by the original discrete joint distribution.

This is the distinction between the continuous and discrete cases that matters most for the present paper. In the continuous case,
\begin{equation}
\text{joint distribution}\longrightarrow\text{unique copula},
\end{equation}
whereas in the discrete case an ordinary Sklar copula cannot by itself serve as a unique intrinsic margin-free dependence object. The difficulties and limitations of discrete copulas in this respect are discussed systematically by Genest and Ne\v{s}lehov\'a \cite{GenestNeslehova2007}.

Accordingly, for the binary Born statistics studied in the main text, we do not select an arbitrary representative from the nonunique family of Sklar copulas. Instead, we adopt the dependence-nucleus viewpoint of Geenens \cite{Geenens2020}, using the equivalence class under positive local reweightings as the margin-free dependence object. The positive-table odds-ratio description, the uniform-margin representative, and the two-qubit reconstruction theorem in Section~\ref{sec:reconstruction} are all built on this discrete implementation.

\bibliographystyle{unsrt}
\bibliography{references}

\end{document}